\documentclass[conference]{ieeeconf}
\IEEEoverridecommandlockouts
\usepackage{cite}

\usepackage{amsmath,amssymb,amsfonts}
\usepackage{subfigure}
\usepackage{algorithmic}
\usepackage{graphicx}
\usepackage{textcomp}
\usepackage{xcolor}

\newtheorem{proposition}{Proposition}

\newtheorem{remark}{Remark}
\newtheorem{assumption}{Assumption}

\def\BibTeX{{\rm B\kern-.05em{\sc i\kern-.025em b}\kern-.08em
    T\kern-.1667em\lower.7ex\hbox{E}\kern-.125emX}}

\IEEEoverridecommandlockouts 

\title{\bf A Neural Network Approach towards Real-Time Knot Point Density Estimates for the Nonlinear Quadrotor}

\author{Jasmine Ratchford, Raffaele Romagnoli, and Mark H. Klein
        \thanks{J. Ratchford conducted this work while with the Software Engineering Institute, Carnegie Mellon University, Pittsburgh, PA 15213, USA:
        {\tt\small  jasmine.star.ma@gmail.com}
        }
        \thanks{R. Romagnoli is with the School of Science and Engineering, Duquesne University, Pittsburgh, PA 15213, USA:
        {\tt\small romagnolir@duq.edu}
        }
        \thanks{M. H. Klein is with the Software Engineering Institute, Carnegie Mellon University, Pittsburgh, PA 15235, USA:
        {\tt\small  mk@sei.cmu.edu }
        }
}

\usepackage{fancyhdr}

\fancypagestyle{firstpage}{%
  \fancyhf{}
  \fancyhead[C]{\small[Distribution Statement A] Approved for public release and unlimited distribution}
  \fancyfoot[C]{\small[Distribution Statement A] Approved for public release and unlimited distribution}
  \setlength{\headheight}{0.47in}
  \setlength{\headsep}{-20pt}
 }
\begin{document}
\maketitle
\thispagestyle{firstpage}
\begin{abstract}
Spline representations provide a compact way to approximate continuous trajectories using a finite number of parameters. When approximating trajectories generated by dynamical systems, the efficiency of a spline representation depends strongly on the placement of knot points. Uniform knot distributions often lead to inefficient representations, requiring either unnecessary control points or producing poor approximations in regions where the trajectory varies rapidly.

In this work we propose a learning-based method for adaptive knot placement in B-spline approximations of trajectories generated by ordinary differential equations. A neural network learns a mapping from the initial condition of the system to quantities that characterize the local variation of the resulting trajectory. 
These estimates are used to construct nonuniform knot distributions that approximately equidistribute a local error indicator, reducing the number of control points needed to meet a prescribed accuracy.\footnote{This work has been submitted to the IEEE for possible publication. Copyright may be transferred without notice, after which this version may no longer be accessible.}
\end{abstract}

\section{Introduction}

Modern autonomous systems such as unmanned aerial vehicles, robotic manipulators, and autonomous ground vehicles must predict the future evolution of their dynamics under strict computational constraints \cite{hsu2023safety,siciliano2009robotics, gasparetto2012trajectory}.  Fast and compact representations of system trajectories are required both for motion planning and for safety analysis \cite{wabersich2023data, wursching2021sampling, li2026optimal}. 
Spline-based representations are widely used for this purpose because they provide smooth approximations described by a finite number of  control points \cite{unser1999splines}. Splines are piecewise polynomial \cite{schumaker2007spline} defined over a sequence of knot points that partition the domain into intervals. On each interval the function is described by a polynomial segment, and smoothness conditions are imposed at the knots to ensure continuity of the function and its derivatives across adjacent segments. 

{
B-splines provide a particularly convenient formulation \cite{unser2002b}. The function is expressed as a linear combination of basis functions defined with respect to a knot vector, which specifies the sequence of knot locations and determines the support of each basis function. The coefficients associated with the basis functions are called control points, and their sequence forms the control polygon. This geometric structure provides a compact description of the spline and allows useful bounds on the approximation error to be derived from the relationship between the spline and its control polygon \cite{de1978practical}. B-splines can also be directly employed as a parametric representation of the solution, enabling the approximation of differential equations within the spline space \cite{hughes2005isogeometric, jator2007high,romagnoli2024building}.
}
{
Despite these advantages, constructing efficient B-spline representations remains challenging \cite{rathod2026splineflow, van2015b}. Often, trajectories are obtained by first solving the governing ordinary differential equation and only subsequently approximated using a B-spline.  The efficiency of the representation depends strongly on the placement of knot points \cite{van2015b,rousseau2019minimum, al2025point}. Uniformly spaced knots are commonly used in practice, but they often lead to inefficient representations: simple portions of a trajectory may require only a few control points, while regions where the trajectory varies rapidly may require a much higher density of knots.
} 

Classical results in spline approximation theory relate the error between a spline and its control polygon to the spacing between knot points and to the curvature of the underlying function \cite{de1978practical, prautzsch2002bezier}. {These results motivate adapting the knot distribution to the local variation of the trajectory.} In many dynamical systems the trajectory is uniquely determined by the initial condition together with the governing differential equation \cite{Khalil:1173048}. This observation suggests that information about the complexity of the resulting trajectory—and therefore about an appropriate knot distribution—may be inferred directly from the initial condition.

In this work we explore this idea by learning a mapping from the initial condition of the system to quantities that characterize the local variation of the resulting trajectory. These estimates are then used to determine knot placement for B-spline representations that approximate the solution of the differential equation with a prescribed accuracy. Importantly, the desired approximation tolerance can be specified at inference time, allowing the accuracy of the representation to be adjusted without retraining the model. We first review the spline approximation results that motivate adaptive knot placement and then present the proposed learning framework. We demonstrate that it produces efficient B-spline representations of dynamical trajectories using fewer control points than uniform knot placement while maintaining accuracy.

\section{Preliminaries}\label{sec:preliminaries}

\subsection{Nonlinear Autonomous System}

Consider the autonomous nonlinear system
\begin{equation}\label{eq:closed_loop}
\dot{x} = f(x), \qquad x(0)=x_0 \in D \subset \mathbb{R}^n,
\end{equation}
where $D$ is a compact set containing the origin. 
We assume $f:\mathbb{R}^n \to \mathbb{R}^n$ is locally Lipschitz on $D$. 
For every $x_0 \in D$, the solution $x(\cdot;x_0) : [a,b] \to \mathbb{R}^n$
exists, is unique, and satisfies $x(\cdot;x_0) \in C^2([a,b];\mathbb{R}^n)$ \cite{Khalil:1173048}. 
\subsection{Norms}

Throughout the paper, $\mathbb{R}^n$ is equipped with the Euclidean norm
\[
\|x\| := \sqrt{x_1^2 + \dots + x_n^2}.
\]

For a continuous trajectory $x : [a,b] \to \mathbb{R}^n$, 
the $L^p$ norm ($1 \le p < \infty$) is defined as
\[
\|x\|_{L^p([a,b])}
=
\left(
\int_a^b \|x(t)\|^p \, dt
\right)^{1/p},
\]
and the $L^\infty$ norm as
\[
\|x\|_{\infty}
=
\sup_{t\in[a,b]} \|x(t)\|.
\]

Since $x$ is continuous on the compact interval $[a,b]$, 
the supremum is attained and $\|x\|_{\infty}<\infty$ \cite{sawano2018theory}.

\subsection{B-Splines}

Let $d \ge 1$ denote the spline degree and 
$\hat{\mathbf t} = (\hat t_1,\dots,\hat t_{\ell+d+1})$ 
a nondecreasing knot vector on $[a,b]$ with clamped endpoints:
\[
\hat t_1 = \cdots = \hat t_{d+1} = a,
\qquad
\hat t_{\ell+1} = \cdots = \hat t_{\ell+d+1} = b.
\]

The B-spline basis functions $B_{j,d}(t)$ are defined recursively 
via the Cox--de Boor formula \cite{de1978practical, schumaker2007spline}. 
They satisfy
\[
B_{j,d}(t) \ge 0,
\qquad
\sum_{j=1}^{\ell} B_{j,d}(t) = 1,
\qquad \forall t\in[a,b].
\]

A vector-valued spline $s : [a,b] \to \mathbb{R}^n$ 
is given by
\[
s(t) = \sum_{j=1}^{\ell} c_j B_{j,d}(t),
\qquad c_j \in \mathbb{R}^n.
\]

Where $\{c_i\}$ are the set of control points. By non-negativity and partition of unity,
\[
s(t) \in \operatorname{conv}\{c_1,\dots,c_\ell\},
\qquad \forall t\in[a,b].
\]
where $conv\{\}$ indicates the convex hull.
\begin{remark}
In this paper we consider a UAV application to avoid ambiguity with the state vector $x(t)$, we denote the position components by $p = (p_x, p_y, p_z)$ and the corresponding velocities by $\dot{p} = (v_x, v_y, v_z)$.  
\end{remark}

\section{Problem Formulation}

We consider a controlled nonlinear system of the form
\begin{equation}\label{eq:controlled_system}
\dot x = f(x,u),
\end{equation}
equipped with a smooth state-feedback controller $u = -\gamma(x),$ so that the closed-loop dynamics become autonomous:
\begin{equation}\label{eq:closed_loop_pf}
\dot x = f(x).
\end{equation}

We assume that the closed-loop system satisfies the regularity
properties introduced in Section~\ref{sec:preliminaries}. 
In particular, for every initial condition $x_0 \in D$,
there exists a unique trajectory $x(\cdot;x_0) \in C^2([a,b];\mathbb{R}^n).$ In practice, the trajectory is available only through sampled evaluations 
$x(t_k; x_0)$ obtained via numerical integration of the differential equation. These samples are used to construct the spline approximation. {Our objective is to approximate $x(\cdot;x_0)$ by a spline of fixed degree $d$ and $\ell$ control points, while adaptively selecting the internal knot partition as a function of $x_0$.}

\subsection{Spline Representation and Least-Squares Control Points}

Let $\hat{\mathbf t}$ denote a clamped knot vector on $[a,b]$
with internal knots $\boldsymbol\tau$.
Define the basis vector
\[
\mathbf b(t)
=
\begin{bmatrix}
B_{1,d}(t) & \cdots & B_{\ell,d}(t)
\end{bmatrix}^{\!\top}.
\]

Let $C \in \mathbb{R}^{\ell \times n}$ collect the control points,
so that the spline approximation can be written compactly as
\begin{equation}\label{eq:spline_matrix_form}
s(t;\boldsymbol\tau,C)
=
C^\top \mathbf b(t).
\end{equation}

Given sampling instants $a \le t_1 < \dots < t_N \le b$,
define the design matrix
\[
\Phi(\boldsymbol\tau)
=
\begin{bmatrix}
\mathbf b(t_1)^\top\\
\vdots\\
\mathbf b(t_N)^\top
\end{bmatrix}
\in \mathbb{R}^{N \times \ell},
\]
and the sampled trajectory matrix
\[
X(x_0)
=
\begin{bmatrix}
x(t_1;x_0)^\top\\
\vdots\\
x(t_N;x_0)^\top
\end{bmatrix}
\in \mathbb{R}^{N \times n}.
\]

For a fixed partition $\boldsymbol\tau$,
the control points are obtained via least squares:
\begin{equation}\label{eq:ls_problem_final}
C^\star(\boldsymbol\tau;x_0)
\in
\arg\min_{C \in \mathbb{R}^{\ell \times n}}
\|\Phi(\boldsymbol\tau) C - X(x_0)\|_F^2.
\end{equation}

\subsection{{From Optimal to Approximate Knot Placement}}

For a given initial condition $x_0$,
the optimal internal knot partition is defined as
\begin{equation}\label{eq:knot_opt_L2}
\boldsymbol\tau^\star(x_0)
\in
\arg\min_{\boldsymbol\tau}
\;
\|x(\cdot;x_0)
-
s(\cdot;\boldsymbol\tau,
C^\star(\boldsymbol\tau;x_0))
\|_{L^2([a,b])}.
\end{equation}

{Computing \eqref{eq:knot_opt_L2} exactly is generally intractable. We emphasize that the optimal control points $C^\star(\boldsymbol\tau;x_0)$ depend on the knot partition $\boldsymbol\tau$, making the free-knot problem nonlinear and generally nonconvex. The proposed method does not solve \eqref{eq:knot_opt_L2} directly. Instead, we exploit classical approximation theory to guide partition selection. In particular, Jackson-type inequalities for spline approximation provide local error estimates involving
\[
h_i^2 \|x''\|_{L^\infty([t_i,t_{i+1}])},
\]
where $h_i$ denotes the local knot spacing. This motivates the curvature-based local error indicator and the associated design objective}
\begin{equation}\label{eq:surrogate_obj_final}
\min_{\boldsymbol\tau}
\;
\max_i
\left(
h_i^2 \|x''\|_{L^\infty([t_i,t_{i+1}])}
\right)
\end{equation}
which favors smaller intervals in regions of higher curvature. For notational simplicity, when the initial condition $x_0$ is fixed, we denote the corresponding trajectory by $x(\cdot) := x(\cdot;x_0)$.

\section{Theoretical Results}

This section provides a theoretical justification for the curvature-based knot selection strategy used in the proposed learning framework. The analysis is based on classical approximation theory and Jackson-type inequalities for spline approximation.

\subsection{Linear Interpolation and Spline Surrogates}

We begin with a classical result for linear interpolation.

\begin{proposition}[Linear Interpolation \cite{de1978practical}]\label{prop:linear_bound}
Let $x \in C^2([a,b];\mathbb{R}^n)$ and let
$a=t_0<t_1<\dots<t_m=b$ be a partition with
$h_i=t_{i+1}-t_i$.
Let $\mathcal I_1 x$ denote the component-wise
piecewise-linear interpolant of $x$ on this partition.
Then for each sub-interval $[t_i,t_{i+1}]$,
\begin{equation}\label{eqn:polygon_error}
\|x-\mathcal I_1 x\|_{L^\infty([t_i,t_{i+1}])}
\le
\frac{h_i^2}{8}
\|x''\|_{L^\infty([t_i,t_{i+1}])}.
\end{equation}
\end{proposition}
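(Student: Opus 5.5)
The plan is to work on a single sub-interval $[t_i,t_{i+1}]$, since $\mathcal I_1 x$ restricted there is just the chord joining $x(t_i)$ and $x(t_{i+1})$. The one subtlety is the norm: the bound is stated in the Euclidean norm of $\mathbb{R}^n$, while the interpolant is built component-wise. Applying the scalar estimate to each component separately would only give $\|e(t)\|\le \frac{h_i^2}{8}\big(\sum_k \|x_k''\|_\infty^2\big)^{1/2}$. That is weaker than the claim, because the suprema of the components may be attained at different points. I will therefore use an argument that treats the error vector directly, either through a Peano-kernel representation or through a projection trick.

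Fix $t\in[t_i,t_{i+1}]$ and set $e(s)=x(s)-\mathcal I_1x(s)$ on that interval. Then $e\in C^2$, $e(t_i)=e(t_{i+1})=0$, and $e''=x''$, since the interpolant is affine there. This is a two-point boundary value problem, so $e$ has the Green's-function representation
\[
e(t)=-\int_{t_i}^{t_{i+1}} G(t,s)\,x''(s)\,ds,
\]
holding component-wise. The kernel is
\[
G(t,s)=\frac{(\min(t,s)-t_i)(t_{i+1}-\max(t,s))}{h_i}\ \ge 0 .
\]
The key facts to verify are $G\ge 0$ and
\[
\int_{t_i}^{t_{i+1}} G(t,s)\,ds=\tfrac12 (t-t_i)(t_{i+1}-t).
\]
The second follows by applying the representation to the function $\tfrac12(s-t_i)(s-t_{i+1})$, whose second derivative is $1$ and which vanishes at both endpoints.

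Because $G$ is nonnegative and the integral is a vector integral, Minkowski's inequality gives
\[
\|e(t)\|\le \int G(t,s)\,\|x''(s)\|\,ds\le \tfrac12 (t-t_i)(t_{i+1}-t)\,\|x''\|_{L^\infty([t_i,t_{i+1}])}.
\]
The factor $(t-t_i)(t_{i+1}-t)$ is maximized at the midpoint of the interval, where it equals $h_i^2/4$. This yields the constant $h_i^2/8$, and taking the supremum over $t$ gives \eqref{eqn:polygon_error}.

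A backup route avoids the Green's function. Pick $t^*$ where $\|e\|$ is maximal and let $v=e(t^*)/\|e(t^*)\|$. Define the scalar function $\varphi(s)=\langle v,e(s)\rangle$. It vanishes at both endpoints and satisfies $|\varphi''|=|\langle v,x''\rangle|\le\|x''\|$. The classical scalar argument then gives $\varphi(t^*)\le \frac{h_i^2}{8}\sup\|x''\|$: one can compare with the quadratic $\frac{M}{2}(s-t_i)(t_{i+1}-s)$ via the maximum principle, or Taylor-expand about $t^*$ toward the nearer endpoint. Since $\varphi(t^*)=\|e(t^*)\|$, this is exactly the claimed bound.

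The main obstacle is this vector-versus-component issue. Once it is handled with either route, the rest is the routine one-dimensional computation.
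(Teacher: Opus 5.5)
Your argument is correct. It is a genuinely different and more careful route than the paper's, which gives no proof and simply cites de Boor.

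The classical argument behind that citation is scalar. The error formula gives $x_k(t)-\mathcal I_1x_k(t)=\tfrac12(t-t_i)(t-t_{i+1})\,x_k''(\xi_k)$, where the mean-value point $\xi_k$ depends on $t$ and on the component $k$. Applied component-wise, as the citation implicitly is, this proves the bound for the max-norm on $\mathbb{R}^n$, since suprema over $t$ and maxima over $k$ commute. For the Euclidean norm the paper fixes in its Preliminaries, it gives only the weaker $\tfrac{h_i^2}{8}\bigl(\sum_k\|x_k''\|_{\infty}^2\bigr)^{1/2}$, as you note. The reason is that a vector-valued function has no mean-value form with a single $\xi$.

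Your Green's-function identity is the integral (Peano-kernel) version of the same error formula. Replacing the mean-value point by an average against the nonnegative kernel $G$ is exactly what lets you move the Euclidean norm inside the integral while keeping the sharp constant $1/8$. The cost is justifying the representation, for instance by two integrations by parts or by uniqueness for the Dirichlet problem $e''=x''$, $e(t_i)=e(t_{i+1})=0$.

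The projection route is cheaper still, because it reduces everything to the classical scalar lemma for one well-chosen function $\varphi$. If you use the Taylor variant there, make two points explicit. First, dispose of the trivial case $e(t^*)=0$ before normalizing $v$. Second, state that $t^*$ is then an interior maximizer of $\varphi$, since $\varphi(s)\le\|e(s)\|\le\|e(t^*)\|=\varphi(t^*)$, so that $\varphi'(t^*)=0$. The maximum-principle comparison with $\tfrac{M}{2}(s-t_i)(t_{i+1}-s)$ needs neither remark.
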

\vspace{0.1in}
\noindent
Proposition~\ref{prop:linear_bound} shows that the local interpolation
error scales quadratically with the interval length and linearly with
the curvature magnitude. In particular, the quantity
\begin{equation}\label{eq:spline_accuracy}
h_i^2\,\|x''\|_{L^\infty([t_i,t_{i+1}])}    
\end{equation}

naturally measures the local approximation difficulty, suggesting that
smaller intervals should be allocated in regions where the curvature
$\|x''\|$ is large.





\subsection{Jackson-Type Spline Approximation Bound}

A classical Jackson inequality for spline approximation states that, for
$x(\cdot)\in C^r([a,b];\mathbb{R}^n)$ with $1\le r\le d+1$, the best
approximation error by degree-$d$ splines with $m$ knot spans satisfies
\begin{equation}\label{eqn:jackson}
\begin{aligned}
\inf_{s\in\mathcal S_{d,\boldsymbol\tau_m}}
\|x(\cdot)-s\|_{L^\infty([a,b])}
\le\;& (b-a)^r\, m^{-r} \\
&\times \|x^{(r)}(\cdot)\|_{L^\infty([a,b])}\, C_{d,r},
\end{aligned}
\end{equation}
where $C_{d,r}>0$ depends only on $(d,r)$ and is independent of $m$ and $x$\cite{devore1998nonlinear}.
The inequality implies the convergence rate
\[
\inf_{s}\|x-s\|_{L^\infty}=O(m^{-r}).
\]

Since the mesh size satisfies $h \approx (b-a)/m$,
inequality \eqref{eqn:jackson} can equivalently be written in terms of $h$.

\medskip

In the present setting we assume $x(\cdot)\in C^2([a,b];\mathbb{R}^n)$.
Specializing \eqref{eqn:jackson} to $r=2$ yields 
\begin{equation}\label{eqn:jackson_h}
\inf_{s\in\mathcal S_{d,\boldsymbol\tau}}
\|x-s\|_{L^\infty([a,b])}
\le
C_d\, h^2\,
\|x''\|_{L^\infty([a,b])},
\end{equation}
for some constant $C_d>0$ depending only on the spline degree $d$ \cite{fasshauer_578_notes}.

The Jackson inequality characterizes the \emph{best achievable} approximation error, but it does not prescribe how knots should be distributed in a nonuniform partition. In particular, the bound in \eqref{eqn:jackson_h} depends on the global mesh size $h=\max_i h_i$, which may be overly conservative when the curvature of $x$ varies across the domain. To obtain insight into the distribution of approximation error, it is necessary to consider local error contributions. Combining the local scaling identified in \eqref{eq:spline_accuracy}
with the global rate in \eqref{eqn:jackson_h} motivates using the
largest local term
\[
h_i^2\|x''\|_{L^\infty([t_i,t_{i+1}])}
\]
as an indicator of approximation difficulty for nonuniform
partitions.
This observation provides the link between global optimal approximation rates and local mesh adaptation, and motivates the introduction of a curvature-weighted surrogate functional.

\subsection{Curvature-Based Surrogate Functional}\label{sec:curvature}

Motivated by the preceding results, we consider the surrogate optimization problem \eqref{eq:surrogate_obj_final} written as
\begin{equation}\label{eq:surrogate_problem}
\min_{\boldsymbol\tau} \mathcal J(\boldsymbol\tau),
\end{equation}
where
\begin{equation}\label{eq:surrogate_functional}
\mathcal J(\boldsymbol\tau)
:=
\max_{0 \le i \le m-1}
\left(
h_i^2
\|x''\|_{L^\infty([t_i,t_{i+1}])}
\right),
\end{equation}
and $\boldsymbol\tau=\{t_i\}_{i=0}^m$ denotes the partition.

The functional $\mathcal J$ reflects the local $h^2$-scaling predicted by both Jackson-type estimates and classical interpolation bounds. Minimizing $\mathcal J$ therefore promotes a redistribution of the mesh so that the dominant local error contributions are balanced across intervals.

\subsection{Equidistribution Principle}

We now introduce an approximate equidistribution principle for the
surrogate functional under a local frozen-curvature assumption.

{\begin{assumption}
\label{ass:local_curvature}
For sufficiently fine partitions and small boundary perturbations,
$\|x''\|_{L^\infty([t_i,t_{i+1}])}$ is locally constant to first order,
so changes in the local indicators are dominated by changes in $h_i$.
\end{assumption}}

\begin{proposition}[Approximate Equidistribution Principle]
\label{prop:equidistribution}
Let $\boldsymbol\tau=\{t_i\}_{i=0}^m$ be a partition of $[a,b]$
with a fixed number of intervals $m$, and consider the surrogate
optimization problem \eqref{eq:surrogate_problem} under
Assumption~\ref{ass:local_curvature}.

If $\boldsymbol\tau^\star$ is a minimizer of the locally approximated
problem, then the local indicators satisfy
\[
h_i^2
\|x''\|_{L^\infty([t_i,t_{i+1}])}
\approx
\mathcal J(\boldsymbol\tau^\star)
\]
across all active intervals. In particular,
\[
h_i^2
\|x''\|_{L^\infty([t_i,t_{i+1}])}
\approx
\mathrm{const}.
\]

In the continuum limit, writing $h(t)\approx 1/\sigma(t)$, the
corresponding equidistributed knot density satisfies
\[
\sigma(t)
\propto
\sqrt{\|x''(t)\|}.
\]
\end{proposition}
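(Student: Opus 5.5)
Under Assumption~\ref{ass:local_curvature} I would first freeze the curvature weights. Set $\kappa_i:=\|x''\|_{L^\infty([t_i,t_{i+1}])}$ and treat the $\kappa_i$ as constants with respect to small perturbations of the knots. The locally approximated problem is then the finite-dimensional minimax problem
\[
\min_{h_0,\dots,h_{m-1}\ge 0}\ \max_i\, \kappa_i h_i^2
\quad\text{subject to}\quad \sum_{i=0}^{m-1} h_i=b-a ,
\]
since choosing a partition with $m$ intervals is the same as choosing nonnegative spacings that sum to $b-a$. Intervals with $\kappa_i=0$ contribute nothing to the maximum. These are the "inactive" intervals, and they are set aside. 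The claim concerns only the active intervals, those with $\kappa_i>0$.

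The core step is an exchange argument. Let $\boldsymbol\tau^\star$ be a minimizer and $J^\star=\mathcal J(\boldsymbol\tau^\star)$. Suppose some active interval $j$ has $\kappa_j h_j^2<J^\star$. Let $I$ be the set of indices attaining the maximum. Shrink $h_j$ by a small $\varepsilon>0$ and enlarge each $h_i$, $i\in I$, by $\varepsilon/|I|$. This keeps the sum equal to $b-a$. With the $\kappa$'s frozen, every maximizing term then changes by $-2\kappa_i h_i\,\varepsilon/|I|+O(\varepsilon^2)$, which is strictly negative for small $\varepsilon$. Interval $j$ stays below $J^\star$ by continuity. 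So the maximum strictly decreases, which contradicts optimality.

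I have the direction reversed as written: we want the maximizers to shrink. So the transfer must take length \emph{from} each $i\in I$ and give it \emph{to} $j$. Done that way, the argument goes through and shows $\kappa_i h_i^2=J^\star$ for every active $i$. Equivalently, $h_i=\sqrt{J^\star/\kappa_i}$. The constraint then fixes
\[
\sqrt{J^\star}=\frac{b-a}{\sum_i \kappa_i^{-1/2}} .
\]
The word "$\approx$" in the statement reflects that Assumption~\ref{ass:local_curvature} holds only to first order. The actual $\kappa_i$ move by $O(\varepsilon)$ when knots move, so equality holds up to those perturbation errors.

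For the continuum statement I would substitute $h(t)\approx 1/\sigma(t)$ and $\kappa_i\approx\|x''(t)\|$ for $t\in[t_i,t_{i+1}]$ into $\kappa_i h_i^2=\mathrm{const}$. This gives $\|x''(t)\|/\sigma(t)^2=\mathrm{const}$, hence $\sigma(t)\propto\sqrt{\|x''(t)\|}$. The normalization $\int_a^b\sigma=m$ fixes the constant.

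The main obstacle is rigor in the exchange step. The true $\kappa_i$ depend on the knot locations through a sup over a moving interval, and that dependence is not differentiable. This is exactly why the result is stated only approximately and rests on Assumption~\ref{ass:local_curvature}. I would therefore prove the exact statement for the frozen-weight problem and attribute the approximation solely to the assumption. Degenerate cases, where $\kappa_i=0$ or the minimizer has $h_i=0$, are handled by the active-interval restriction.
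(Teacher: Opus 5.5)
Once you fix the transfer direction as you indicate (taking length from the maximizing intervals and giving it to the deficient one), your argument is essentially the paper's: a proof by contradiction via a length-exchange perturbation with frozen curvature, followed by the same substitution $h\approx 1/\sigma$ in the continuum limit. The only differences are that the paper spreads the removed length over all non-maximizing intervals rather than a single interval $j$, and that your extra closed form $\sqrt{J^\star}=(b-a)/\sum_i\kappa_i^{-1/2}$ holds only when every $\kappa_i>0$ (if some $\kappa_i=0$, the frozen problem degenerates to $J^\star=0$ by giving all the length to the flat intervals).
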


\begin{proof}
Define the local indicators
\[
\eta_i(\boldsymbol\tau)
:=
h_i^2\,\|x''\|_{L^\infty([t_i,t_{i+1}])},
\qquad
h_i=t_{i+1}-t_i,
\]
so that
\[
\mathcal J(\boldsymbol\tau)
=
\max_i\eta_i(\boldsymbol\tau).
\]

In accordance with Assumption~\ref{ass:local_curvature}, the local
curvature quantities are treated as constant to first order under the
small perturbations considered below. Hence, the leading-order changes
in the indicators are determined by changes in the interval lengths.

Let $\boldsymbol\tau^\star$ be a minimizer of the locally approximated
problem among all partitions with $m$ intervals, and define
\[
\lambda:=\mathcal J(\boldsymbol\tau^\star).
\]
Suppose, by contradiction, that the local indicators are not balanced.
Define the sets of active and inactive intervals as
\[
\mathcal A
:=
\{i:\eta_i(\boldsymbol\tau^\star)=\lambda\},
\qquad
\mathcal B
:=
\{j:\eta_j(\boldsymbol\tau^\star)<\lambda\}.
\]
Under the contradiction hypothesis, $\mathcal B$ is nonempty.

Consider a sufficiently small perturbation that reduces the length of
each active interval by $\delta>0$ and redistributes the removed length
among the inactive intervals while preserving
\[
\sum_{i=0}^{m-1}h_i=b-a.
\]
For each $i\in\mathcal A$, let $h_i'=h_i-\delta.$ Since the local curvature quantity is treated as fixed and $h_i'<h_i$,
\[
\eta_i'
:=
(h_i')^2
\|x''\|_{L^\infty([t_i,t_{i+1}])}
<
\eta_i(\boldsymbol\tau^\star)
=
\lambda.
\]

Because every inactive interval initially satisfies
\[
\eta_j(\boldsymbol\tau^\star)<\lambda,
\qquad j\in\mathcal B,
\]
and the indicators depend continuously on the interval lengths under
the local frozen-curvature approximation, $\delta$ can be selected
sufficiently small so that the redistributed length does not cause any
inactive indicator to reach $\lambda$. Therefore,
\[
\eta_j'<\lambda,
\qquad j\in\mathcal B.
\]

It follows that
\[
\mathcal J(\boldsymbol\tau')
=
\max_i\eta_i'
<
\lambda
=
\mathcal J(\boldsymbol\tau^\star),
\]
which contradicts the optimality of $\boldsymbol\tau^\star$ for the
locally approximated problem. Therefore, the local indicators must be
approximately balanced:
\[
h_i^2
\|x''\|_{L^\infty([t_i,t_{i+1}])}
\approx
\mathrm{const}.
\]

In the continuum limit, imposing $h(t)^2\|x''(t)\| \approx \mathrm{const}$ and using $h(t)\approx 1/\sigma(t)$ gives $\sigma(t) \propto \sqrt{\|x''(t)\|}.$

\end{proof}

We emphasize that $\mathcal J$ is a local error indicator rather than
the exact least-squares objective in \eqref{eq:knot_opt_L2}. Its
structure motivates learning curvature-related quantities from the
initial condition to construct an adaptive knot distribution.

\section{Approach}

{The results of Section IV motivate approximately equidistributing the local indicators in \eqref{eq:spline_accuracy} across knot intervals.} This suggests that an effective knot selection strategy requires knowledge of the local variation of the trajectory, as captured by $\|x''\|$.

However, evaluating $\|x''\|$ directly requires access to the full trajectory, which in turn requires solving the underlying differential equation. To overcome this limitation, we approximate these curvature-related quantities using a learned model that maps the initial condition $x_0$ to a surrogate description of the trajectory's local variation.

\subsection{Learned Knot Density}

Given an initial condition $x_0$, our goal is to estimate a time-varying continuous knot density function $\rho(t)$ that reflects the local complexity of the trajectory. Motivated by the scaling laws derived in Section IV, regions of higher curvature should correspond to higher knot density.

In particular, using the local interpolation bound
\[
\|x - I_1 x\| \lesssim \frac{h_i^2}{8} \|x''\|,
\]
we define the knot density (a continuous version of the inverse of $h_i$) as
\[
\rho(t) = \sqrt{\frac{\|x''(t)\|}{8\, \varepsilon}},
\]
where $\varepsilon > 0$ is a prescribed approximation tolerance. {Under the local approximation of Section IV, this choice approximately equidistributes the local error indicator.}

Since $\|x''(t)\|$ is not directly available and requires a choice in metric, we train a neural network to predict a surrogate quantity $\hat{g}(t; x_0)$ that approximates the magnitude of the second derivative along the trajectory under a chosen weighting of the state dimensions. The predicted density is then given by
\[
\rho(t) = \sqrt{\frac{\hat{g}(t; x_0)}{8\, \varepsilon}}.
\]
{The weighting is selected to normalize state components with different physical units and scales.}

\subsection{Knot Placement from Density}

Once $\rho(t)$ is estimated, knot locations are obtained by integrating the density over time. Specifically, define the cumulative density
\[
R(t) = \int_0^t \rho(\tau)\, d\tau.
\]
Knots are placed at time instants where $R(t)$ increases by one unit, which corresponds to selecting knot locations according to the quantiles of the cumulative density. {While Section IV considers a fixed number of intervals, here $R(b)$ determines the number of knot spans for the prescribed tolerance, whereas the quantiles of $R(t)$ determine their relative locations.} To ensure robustness and practical feasibility, we impose bounds on the density:
\[
\rho_{\text{final}}(t) = \max\left(\rho_{\min}, \min(\rho_{\max}, \rho(t))\right),
\]
which limits the minimum and maximum knot spacing.

\subsection{Application to Quadrotor Dynamics}

We now instantiate the proposed framework on a nonlinear quadrotor system \cite{romagnoli2024building}. The state consists of the position components $p = (p_x, p_y, p_z)$, the orientation angles $(\theta,\phi,\psi)$, and their corresponding velocities.

We focus on the velocity states
\[
(v_x, v_y, v_z, \dot{\theta}, \dot{\phi}, \dot{\psi}),
\]
where $(v_x, v_y, v_z) = \dot{p}$. The position variables can be recovered by integration and are not directly used in the control loop. In this setting, the curvature-related quantity $\|x''\|$ corresponds to the second derivative of the velocity components (jerk), which provides a meaningful measure of local trajectory variation. Since the true trajectory $x(t)$ is not available in closed form, we approximate this quantity using the spline representation $s(t)$ constructed from sampled data, and use $\|s''\|$ as a proxy for $\|x''\|$. To combine translational and rotational components, each velocity is normalized by its corresponding physical limits, yielding a dimensionless measure of variation across all components.

The target signal for training is obtained by applying a second-derivative Gaussian convolution (Laplacian of Gaussian) to the velocity trajectories and extracting the maximum magnitude across all components at each time step. This provides an estimate of
\[
\|s''\|_{L^\infty([t_i,t_{i+1}])},
\]
which serves as supervision for the neural network. At inference time, the network predicts the jerk envelope $\hat{g}(t; x_0)$ directly from the initial condition. This prediction is then used to construct the knot density $\rho(t)$ and, through integration, the corresponding knot locations.

\section{Experimental Design}
\subsection{Data}
For each initial condition, trajectories are generated via numerical integration of the nonlinear dynamics and subsequently approximated using B-splines, from which the surrogate curvature signal is computed. {The controller is treated as part of the closed-loop trajectory generator and is not designed or analyzed in this work; it stabilizes the quadrotor at a prescribed hovering setpoint.}
The physical constants we use in the quadrotor solver are given in Table \ref{tab:experiment_info}. 
Input initial conditions are normalized by the R, the radius used to express the ball of initial conditions, also given in \ref{tab:experiment_info}. We calculate the trajectories from initial conditions using a Python numba ODE numerical integration. Because it is possible, we generate new data every epoch, rendering the term epochs somewhat meaningless.  We create 100 batches of 1000 trajectories per epoch. 

For the purposes of calculating the jerk maximum, we normalize translational velocities by 5 meters per second, roll and pitch velocities by  12 radians per second, and yaw velocities by  6 radians per second. This is roughly inspired by drones such as the CrazyFlie\cite{crazyflie} and HolyBro\cite{holybro}, but is not specific to either.  
We do not account for the potential limits in jerk or velocity resolution as there is limited information and these measures tend to be very platform specific. The full list of relevant parameters used is given in Table \ref{tab:experiment_info}.

\begin{table}[]
    \centering
    \begin{tabular}{rl}
        \hline
        \textbf{Parameter}& \textbf{Value}\\
        \hline
        \multicolumn{2}{c}{\textit{Quadrotor Dynamics }}\\
        \hline
        Total system mass              & 0.600 kg\\
        Quadrotor body mass            & 0.410 kg\\
        Propeller radius               & 0.0504 meters\\
        Arm length                     & 0.108 meters\\
        Propeller mass                 & 0.00311 kg\\
        Motor + propeller mass         & 0.0391 kg\vspace{0.1cm}\\
        
        \hline
        \multicolumn{2}{c}{\textit{Initial Conditions Range (maximum magnitudes)}}\\
        \hline
        $x^{ic}_{\max}$& 5.0 meters / second\\
        $\dot{x}^{ic}_{\max}$& 3.0 meters / second\\
        $\theta^{ic}_{\max}$& $\pi/4$ radians\\
        $\dot{\theta}^{ic}_{\max}$& $\pi/2$ radians / second\vspace{0.1cm}\\

        \hline
        \multicolumn{2}{c}{\textit{Discretization and Processing }}\\
        \hline
        Minimum Knot Density	&2 knots per second	\\
        Maximum Knot Density    &40 knots per second\\
        Upper bound on error	&0.001\\
        Translational Rate Max 	&5 meters / second	\\
        Angular Rate Max (roll, pitch)&12 radians / second	\\
        Angular Rate Max (yaw)&6 radians / second\\
        Laplacian of Gaussian $\sigma$&0.01 seconds\\
        $t_{samp}$&0.005 seconds\\
        T (total time) &2.5 seconds\\
    \end{tabular}
    \vspace{0.1cm }
    \caption{Parameter values used in data generation and processing}
    \label{tab:experiment_info}
\end{table}

\subsection{Network Architecture}
We used a network architecture inspired by image architectures with upsamping convolutions to predict $\|s''\|_{L^\infty([t_i,t_{i+1}])}$ 
from initial conditions. The components include feed-forward embedding layers followed by three upsampling convolutions and a final convolution to combine channels. Because real-time processing can impose compute limitations, we attempted to limit the size of the neural network to less than 100,000 parameters. Practically speaking, there is also limited value to high accuracy estimates of $\|s''\|_{L^\infty([t_i,t_{i+1}])}$
, given that the distribution predicted is only being used to estimate distances between knot points. Our final neural network used for the analysis presented in the next section has only 22,395 parameters. Our network architecture used for the analysis that follows is shown in Table \ref{tab:network}. 
\begin{table}
    \centering
        \begin{tabular}{lrcc}\hline
            \textbf{Layer Type} & \textbf{Input} & \textbf{Output}  & \textbf{Param.}\\
            \hline
            Linear / SiLU & 12 & 64 & 832\\
            Linear / SiLU & 64 & 64 & 4160\\
            Linear / SiLU & 64 & 64 & 4160\\
            Linear / SiLU + Reshape +  & 64 & (6,32) & 12492\\
            BatchNorm & & & \\
            Upsample + Conv1d/SiLU& (6,32) & (6,128) & 186\\ 
            Upsample + Conv1d/SiLU& (6,128) & (6,256) & 186\\ 
            Upsample + Conv1d/SiLU& (6,256) & (6,501) & 186\\ 
            Conv1d / SiLU + residual & (6,501) & (6,501) & 186\\ 
            Conv1d (/SiLU) (+ ReLU(at inference)) & (6,501) & (1,501) & 7\\ 

    \end{tabular}\vspace{0.1cm}
    \caption{The architecture used to predict the jerk activity envelope from initial conditions.}
    \label{tab:network}
\end{table}

We use a learning rate of 0.001 combined with an Adam optimizer and a mean squared error loss. We achieve a best validation loss of $0.08$. Training occurs with early stopping. For the run used in the following analysis, it reached the maximum 100 epochs, possibly suggesting that more training could reduce the error further. We posit that further training is unnecessary and additional work to optimize the neural network to perform similarly with fewer parameters or demonstrating its value at half precision is more valuable. 

\section{Results}
  \begin{figure}[h]
      \centering
      \includegraphics[width=0.85\linewidth]{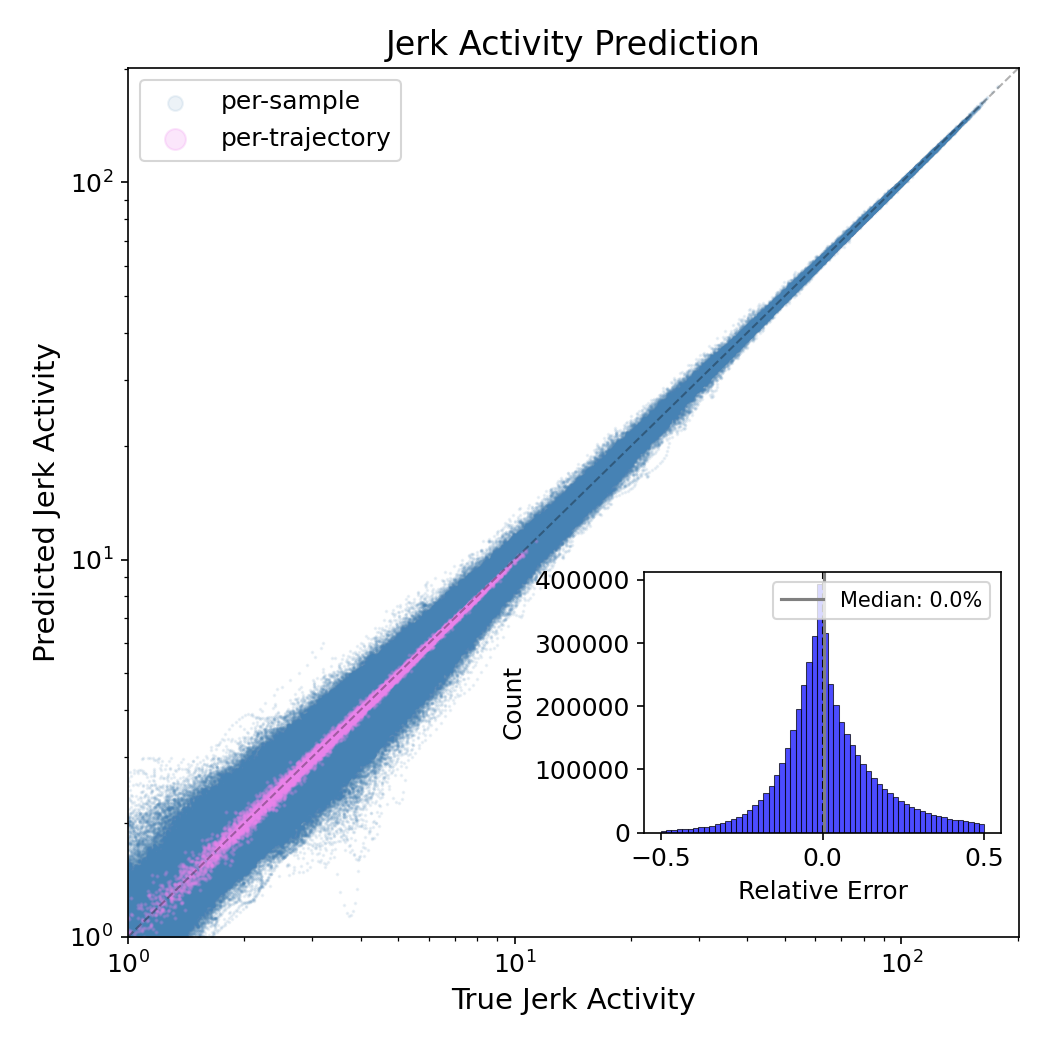}
      \caption{A scatter plot comparing the true jerk activity (x-axis) to the predicted activity (y-axis) (blue) for every sample location along each trajectory, for every trajectory, (purple) averaged per trajectory. The inset histogram shows the distribution of the per-sample jerk percent relative error. The predicted jerk activity matches the true jerk activity well at a per-sample and per-trajectory level.}
      \label{fig:jerk_scatter}
  \end{figure}
Central to our hypothesis is that a small neural network can estimate the envelope of jerk, or 'jerk activity' across the six velocity states. Figure \ref{fig:jerk_scatter} shows a scatter of the true and estimated jerk envelope at a per-sample (e.g. for each sampling point on the trajectory) and at a per-trajectory (e.g. the mean jerk for each trajectory) level. Overall, there is a less than 25\% relative error as shown in the inset histogram. From the larger diagram, its clear that the relative error has a wider distribution at low jerk activity. The mean activity also shows great correspondence, giving more reason to suggest that the overall \emph{number} of knot points predicted is correct.

The distribution of total knot points per trajectory derived using our predicted jerk envelope and the procedure outlined in the Approach section is given in Figure \ref{fig:n_knots_hist}. Again, this meets our expectations, with the number of knots having a broad distribution between 6-8 knots (slightly above our minimum value for number of knots) to about 20 knots. Because quadrotors are most dynamic early in their trajectories, we expect the later parts of the trajectory to be quiescent and lead to a lower knot number.
\begin{figure}
    \centering
    \includegraphics[width=0.7\linewidth]{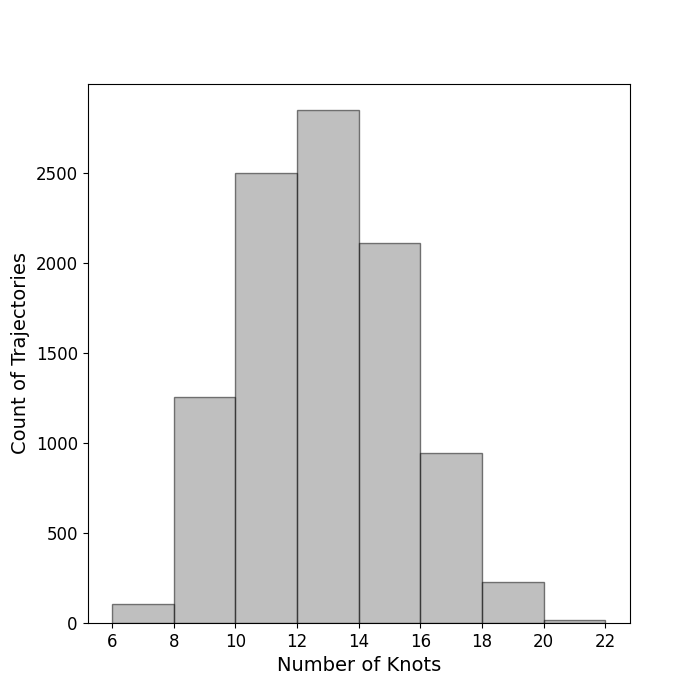}
    \caption{A histogram showing the number of trajectories for which our algorithm prescribed a given number of knots based on a prediction of its jerk throughout the trajectory.}
    \label{fig:n_knots_hist}
\end{figure}

An example of the predicted knot points using the predicted jerk envelope is shown in Figure \ref{fig:example_combined}. This single example shows several typical features, such as a higher density of knot points (given by the ticks at the bottom of the figure) near the start of the trajectory.  We compare this to evenly spaced knot points of the same number. This is shown in orange. The blue and orange scatter points are the Greville abscissae located control points for this trajectory using the predicted (blue) and evenly spaced (orange) knot points. {Both uniformly and adaptively placed knot points describe the trajectory well, with differences visible in the residuals.} However, the control points used with the evenly spaced knots are farther from the trajectory. For evenly  spaced knots as compared to the predicted knot locations, there are more knots in the final, slow dynamical region, and fewer knots in the highly dynamical early trajectory. {For a fair comparison, both methods use the same number of knots, as determined by the prescribed-density procedure.}

\begin{figure}
    \centering
    \begin{subfigure}
        \centering
        \includegraphics[width=0.85\linewidth]{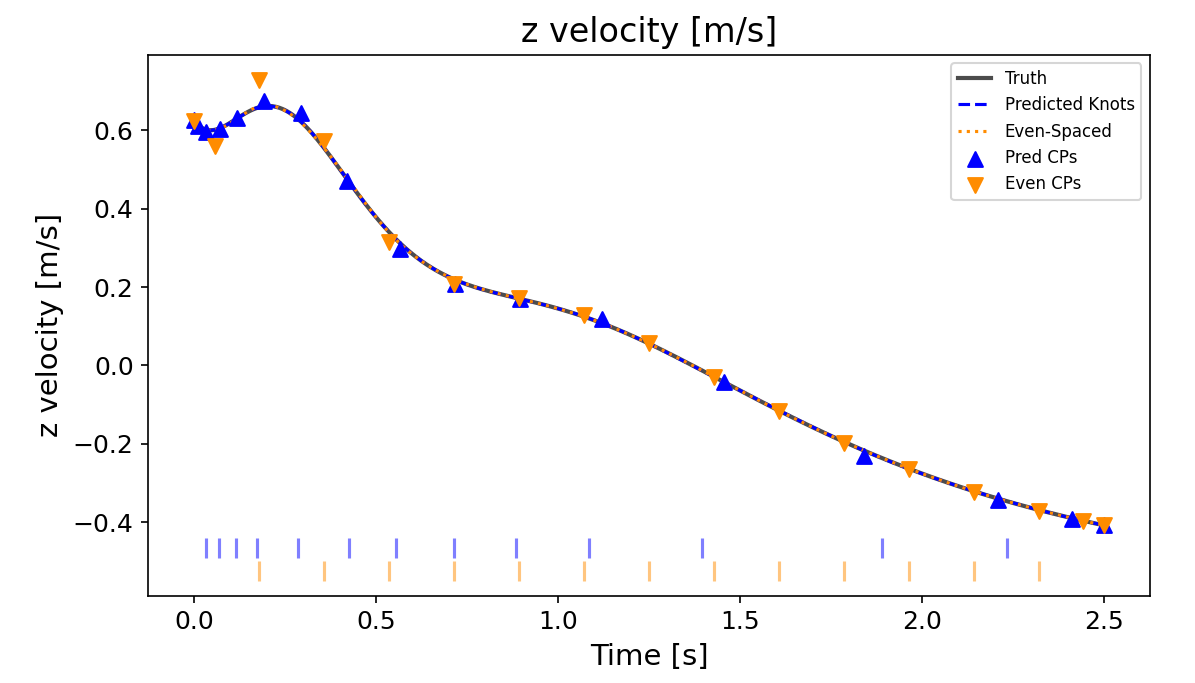}
        \label{fig:example_z}
    \end{subfigure}
    
    \begin{subfigure}
        \centering
        \includegraphics[width=0.85\linewidth]{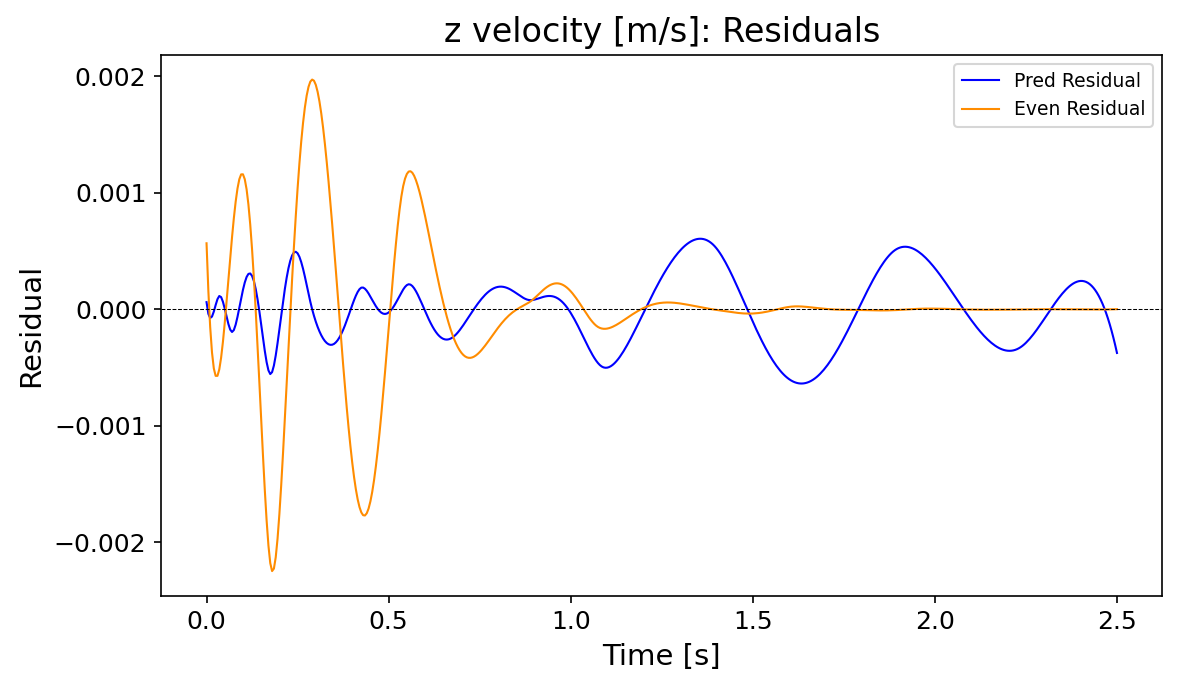}
        \label{fig:example_res}
    \end{subfigure}
    \caption{ Top: A single example trajectory in a single dimension. The true trajectory is shown in black, but is mostly overlaid by the spline fits to the trajectory using the predicted (blue) and evenly spaced (orange) knot points. The knot point locations are given by ticks near the x-axis of the plot. The control point Greville abscissa locations are given by the carrots that match the spline fit color. Bottom: residuals between the true trajectory and each of the spline fits.}
    \label{fig:example_combined}
\end{figure}

A comparison of the control point distance from the trajectory and the associated jerk activity at the nearest sampling point in Figure \ref{fig:cp_v_jerk} shows that the predicted control points maintain a fairly close distance to the trajectory, rising somewhat at low jerk activity. The distribution of knots sacrifices some accuracy at low jerk activity in order to maintain an overall error less than the specified value. In contrast, even control points are associated with a larger distance from the trajectory overall and especially at jerk activity higher than the lowest levels common in the late periods of the trajectory's progress. Shown are only control point dimensions associated with translational velocities. Rotational velocities show a similar pattern.

\begin{figure}
    \centering
    \includegraphics[width=0.85\linewidth, trim={18.1cm 11.5cm 0.5cm 3.5cm}, clip]{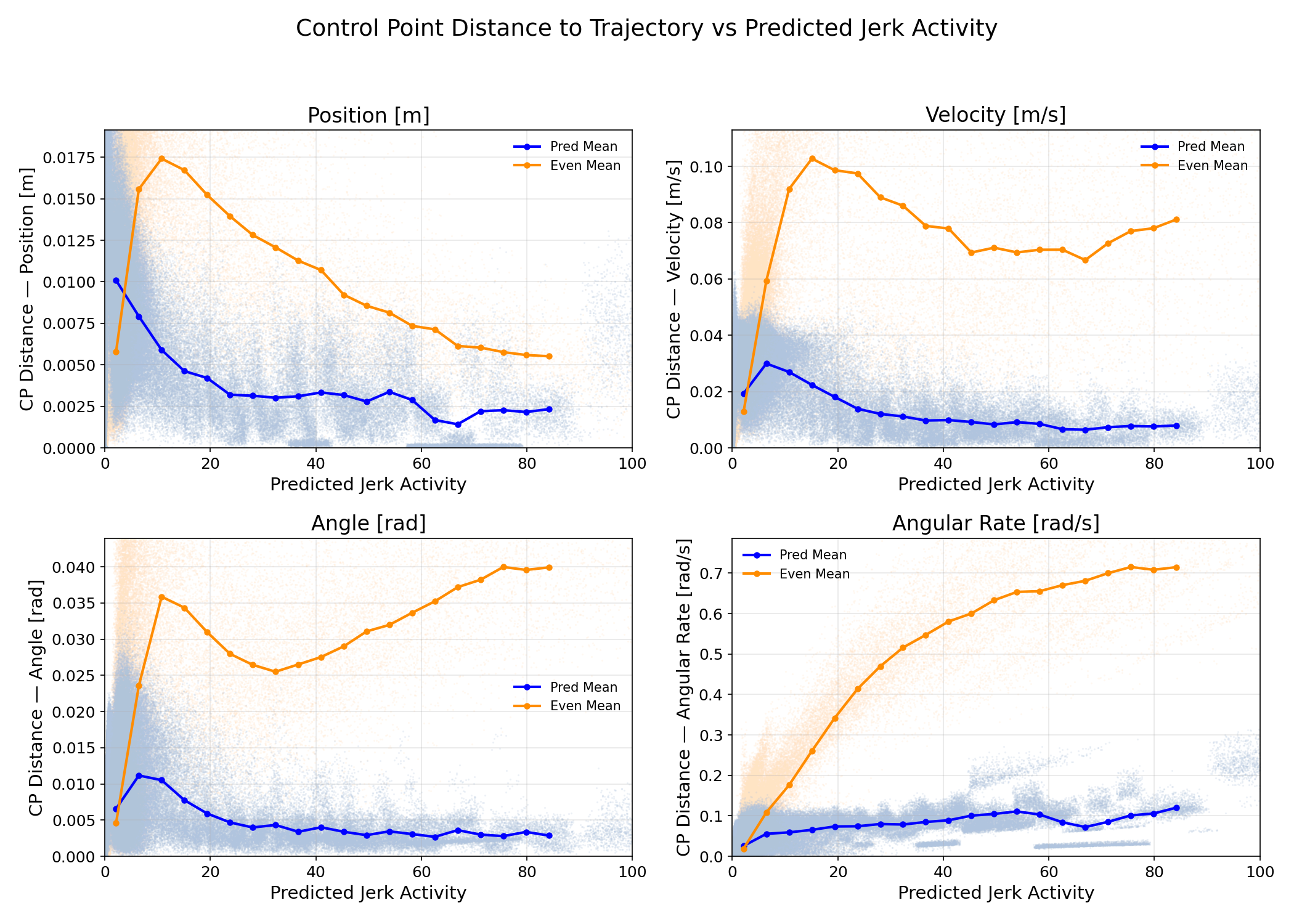}
    \caption{Control point distance from the trajectory compared to the jerk activity associated with the nearest sample point. Blue: control points predicted with the jerk activity method. Orange: control points associated with even distribution. Individual control points are shown as a light, semi-transparent scatter of points. Mean values are shown in opaque colors with connecting lines.}
    \label{fig:cp_v_jerk}
\end{figure}

Figure \ref{fig:rmse_nknots} shows that the RMSE increases as the number of knots decreases for evenly spaced knots, but remains much lower when knot locations are chosen using our prediction method. This is expected. Trajectories with early corrective actions followed by a slow approach to the reference point benefit from an uneven distribution of knots — dense during the initial correction and sparse once the reference is essentially reached. However, trajectories that require more knot points overall tend to have a smaller slow-approach region, meaning their complexity is spread more uniformly across the duration. In such cases, evenly spaced knots more realistically capture the trajectory dynamics.


We evaluated the neural-network inference time on a
Raspberry Pi Zero 2 W with a quad-core 64-bit ARM Cortex-A53 processor and 512~MB of SDRAM. Among 11,000 initial conditions, the first 1,000 were used for warm-up and excluded from the timing results. The mean inference time was 2.07~ms per initial condition, with 95\% of evaluations completed within 2.12~ms. This measurement excludes knot construction and B-spline control-point fitting.

\begin{figure}
    \centering
    \includegraphics[width=0.85\linewidth]{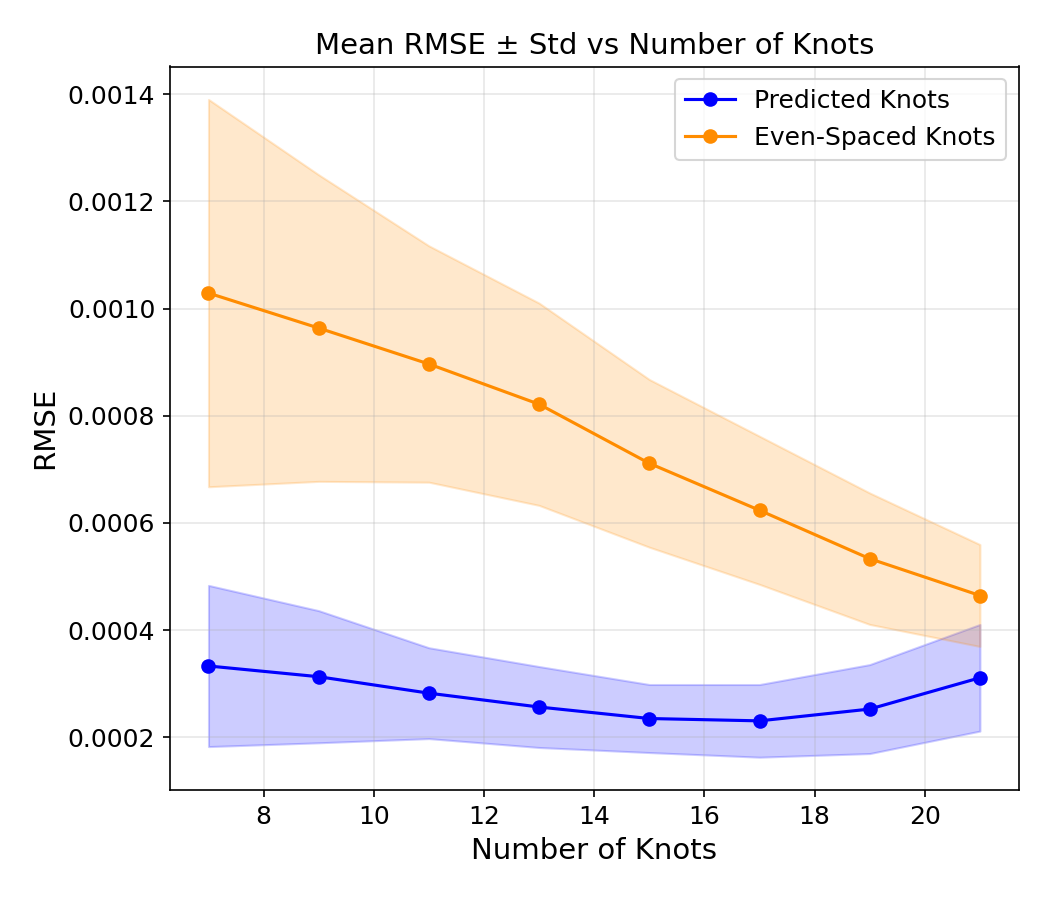}
    \caption{The root mean squared error (RMSE) comparing the spline predicted trajectory with the true trajectory compared the number of knots suggested by the prediction method used for evenly spaced (orange) and jerk-predicted (blue) knot locations. The RMSE for jerk-predicted knots is roughly flat and unaffected by the number of knot points. The evenly spaced knots have a higher RMSE at low number of knot points. {This occurs because uniform placement does not adapt to the trajectory's jerk activity, leading to poorer fits when knots are sparse.}}
    \label{fig:rmse_nknots}
\end{figure}

\section{Conclusions}
{We demonstrate that a small neural network can predict a jerk-based activity signal and construct adaptive knot distributions from the initial condition. For the same number of knots, the proposed placement reduces the spline approximation error and produces control points closer to the trajectory than uniform placement. The method runs on a Raspberry Pi with a mean runtime of approximately 2~ms, supporting its feasibility for resource-constrained platforms. Future work will focus on integration with real-time trajectory prediction and evaluation on target hardware.}

\section{Acknowledgments}
Copyright 2026 Carnegie Mellon University and Duquesne University

This material is based upon work supported by the Department of War under Air Force Contract No. FA8702-15-D-0002 with Carnegie Mellon University for the operation of the Software Engineering Institute, a federally funded research and development center.  

References herein to any specific entity, product, process, or service by trade name, trademark, manufacturer, or otherwise, does not necessarily constitute or imply its endorsement, recommendation, or favoring by Carnegie Mellon University or its Software Engineering Institute nor of Carnegie Mellon University - Software Engineering Institute by any such named or represented entity.

[DISTRIBUTION STATEMENT A] This material has been approved for public release and unlimited distribution.  Please see Copyright notice for non-US Government use and distribution. DM26-0292
\bibliographystyle{IEEEtran} 
\bibliography{bibliography}
\end{document}